\newtheorem{theorem}{Theorem}
\newtheorem{definition}{Definition}
\title{\LARGE \bf Shaping oscillations via mixed feedback}
\author{Weiming Che, Fulvio Forni
	\thanks{W. Che is supported by CSC Cambridge Scholarship. W. Che and F. Forni are with the Department of Engineering, University of Cambridge, CB2 1PZ, UK {\tt\small wc289|f.forni@eng.cam.ac.uk}}}
\begin{document}
	\maketitle
	\thispagestyle{empty}
	\pagestyle{empty}

	\begin{abstract}
		We study the problem of controlling oscillations in closed loop by combining positive
		and negative feedback in a mixed configuration.  We develop a complete 
		design procedure to set the relative strength of the two feedback loops to achieve 
		steady oscillations. The proposed design takes advantage of dominance theory
		and adopts classical harmonic balance and fast/slow analysis to 
		regulate the frequency of oscillations.
		The design is illustrated on a simple two-mass system, a
		setting that reveals the potential of the approach for locomotion, 
		mimicking approaches based on central pattern generators. 
	\end{abstract}
	
	\section{Introduction}
	
	Oscillations are important system behaviors. There are rich examples in biology, like the rhythmic movements of respiration and locomotion \cite{marder2001central}, the cardiac rhythm, and several forms of biochemical oscillations \cite{Goldbeter1997}. In those examples, oscillations are robust to disturbances, yet flexible to respond to external inputs.
	The mechanisms of biological oscillations have inspired several attempts in engineering, like in robot locomotion \cite{kimura1999realization,ijspeert2007swimming}, and in neuromorphic circuits design \cite{ribar2019neuromodulation}. 
	These examples encourage the question of how to design a feedback controller to enforce
	oscillations in closed loop which are robust to perturbations yet tunable, that is, flexible enough to
	adapt their frequency and other features to the needs of specific engineering tasks, \cite{ijspeert2008central}.
	
	
	In this paper, we study the generation and tuning of oscillations through feedback. We look into a controller with 
	a mixed feedback structure, identified by two parallel feedback loops with opposite signs. 
	The reason to look into this controller is that the presence of both positive and negative feedback loops 
	is a recurrent structure in biology \cite{Smolen2001,Mitrophanov2008,tsai2008robust}, 
	with sharp examples in neuroscience \cite{Marder2014,Drion2015,Sepulchre2019}.
	Similarly in engineering, various combinations of positive and negative feedback are widespread
	in the design of  electronic oscillators  \cite{Tucker1972,Bernstein2002,Ginoux2012,Chua1987}.
	In fact, the presence of positive and negative feedback is not accidental but motivated by the 
	specific features of robustness and flexibility that this combination guarantees. In this paper
	we explore the mixed feedback control structure, proposing a design that 
	tunes the balance and strength of positive and negative feedback loops
	to achieve oscillations of desired frequencies.
	
	Finding oscillations in nonlinear systems is challenging. In contrast to the large
	body of methods to stabilize system equilibria, we have a few tools in 
	control theory to enforce and stabilize periodic trajectories. 
	The problem of designing controlled oscillations can be divided into two main components. 
	The first is to determine the existence of a stable limit cycle given the system's dynamics. 
	The principal tool here is the Poincar\'e-Bendixon theorem \cite{Hirsch1974}, which is constrained to planar systems.
	This limitation is overcome here using dominance theory \cite{forni2018differential,miranda2018analysis}, rooted in the theory of monotone 
	systems with respect to high rank cones \cite{Smith1980,Smith1986,Sanchez2009,Sanchez2010}.
	Dominance theory is able to determine when a high-dimensional
	system has a low dimensional attractor, possibly captured by planar dynamics.
	The combination of dominance theory and differential dissipativity also provides a way to characterize robustness and interconnections 
	of oscillating nonlinear systems.
	
	The second component is to shape the limit cycle, to achieve a certain
	frequency of oscillations in closed loop.
	In this paper we will take advantage of the harmonic balance method for oscillations in the quasi-harmonic regime
	\cite{Gelb1968, Mees1975, tesi1996harmonic,iwasaki2008multivariable}.
	We will also look into the literature of relay feedback systems to tackle relaxation oscillations
	\cite{aastrom1995oscillations,rand2012lecture}. 
	
	In what follows we study the oscillator design problem using the mixed feedback amplifier proposed in our previous 
	work \cite{che2020tunable}. The mixed feedback controller is tuned by two parameters:  
	the \emph{balance} $\beta$ regulates the relative strength between positive and negative feedback, and the \emph{gain} $k$ determines the collective feedback strength.
	The objective is to select balance and gain such that the closed loop oscillates at a predefined frequency. 
	We discuss why the mixed feedback structure leads 
	to robust oscillations and we use dominance theory to find the region $\mathcal{R}_{\text{osc}}$ 
	of balances $\beta$ and gains $k$ that guarantee steady oscillations. We show that the mixed feedback
	controller can achieve both quasi-harmonic oscillations and relaxation oscillations, as regulated by
	the balance parameter $\beta$. Thus, we use harmonic balance and fast/slow analysis to find the specific parameter values 
	within $\mathcal{R}_{\text{osc}}$ that guarantee the desired frequency of oscillations.
	
	The paper is organized as follows. Section \ref{sec:model} presents the mixed feedback amplifier, where 
	we also show how the combination of positive and negative feedback leads to a robust 
	destabilizing mechanism that enables oscillations. Section \ref{sec:Methods} discusses
	the main design methods, namely dominance theory, harmonic balance, and fast/slow analysis. 
	We also show how to combine these methods for design purposes.
	The discussion in Sections \ref{sec:Harmonic_Balance} and \ref{sec:fast/slow} focuses on parameter tuning
	based on harmonic balance and fast/slow analysis, with the goal of achieving
	a desired oscillation frequency in closed loop.   
	The design is illustrated on a two-mass spring-damper system in Section \ref{sec:example}, which provides a simplified, rudimentary model of locomotion.
	Taking advantage of asymmetric frictions, we show how different controlled oscillations lead to different 
	locomotion regimes. The paper is concluded in section VII.
	
	\section{The mixed feedback amplifier}\label{sec:model}
	We consider the mixed feedback amplifier proposed in \cite{che2020tunable},
	which has the structure shown in Fig. \ref{fig:sys_block}.
	The load $L(s)$ is controlled by a positive feedback channel $C_p(s)$ combined
	with a negative feedback channel $C_n(s)$. In this paper we restrict the analysis 
	to first order transfer functions
	\begin{equation}
		C_p(s) = \frac{1}{\tau_p s + 1} \qquad C_n(s) = \frac{1}{\tau_n s + 1}  \qquad \tau_n > \tau_p > 0,
	\end{equation}
	which guarantee that the phases of these transfer functions each never exceeds 90 degrees, so that the splitting 
	between negative and positive feedback channels is consistent at any frequency.
	We also assume that the load $L(s)$ has at least relative degree one,
	with poles and zeros whose real part is to the left of $- 1 / \tau_n$, 
	and that $L(0) = 1$ (normalized DC gain).
	\begin{figure}[!h]
		\centering
		\includegraphics[width=0.9\columnwidth]{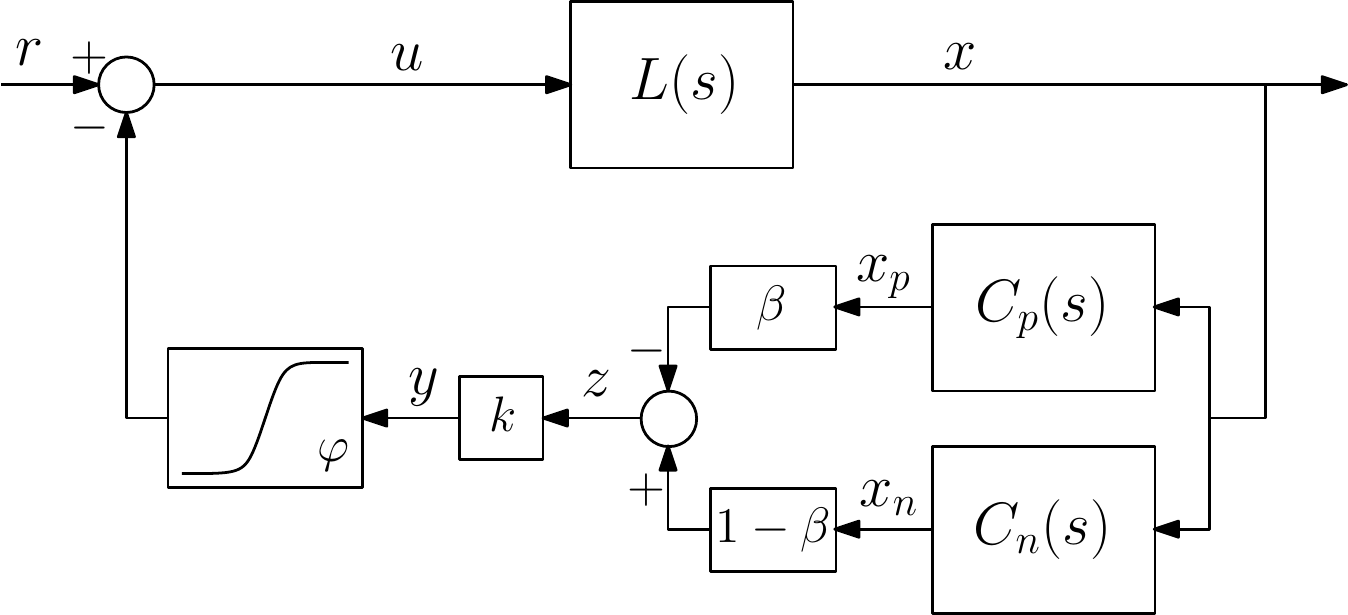}
		\caption{The Block diagram representation of the mixed feedback amplifier.}
		\label{fig:sys_block}
	\end{figure}
	
	The combination of positive and negative feedback is regulated by the parameters $k$ and $\beta$. 
	The \emph{gain} $k\geq 0$ controls the overall magnitude of the feedback and the \emph{balance} 
	$0 \leq \beta \leq 1$ controls the relative strength of two feedback channels. 
	We denote by $G(s)$ the transfer function from $u$ to $z$:	
	\begin{equation}\label{eq:G(s)}
		G(s)= C(s) L(s) = - \frac{\big(\beta(\tau_n+\tau_p)-\tau_p\big)s+2\beta-1}{(\tau_ps+1)(\tau_ns+1)}L(s)
	\end{equation}
	
	The two feedback channels are mixed and fed into a sigmoid function $\varphi$
	(bounded, differentiable, non decreasing, real function with one inflection point)
	whose slope satisfies $0 \leq \varphi'(y) \leq 1$. This monotone nonlinearity preserves the
	\emph{direction} of the control action, in the sense that $\varphi(y)y \geq 0$.
	It also guarantees \emph{boundedness} of the closed loop trajectories if
	$L(s)$, $C_p(s)$, and $C_n(s)$ have poles in the left half of the complex plane
	(by BIBO stability of $G(s)$). 
	With this representation, 
	the mixed-feedback closed loop has the structure of a Lure system, as shown in Fig. \ref{fig:Harmonic_block}. 
	
	\begin{figure}[b]
		\centering
		\includegraphics[width=0.6\columnwidth]{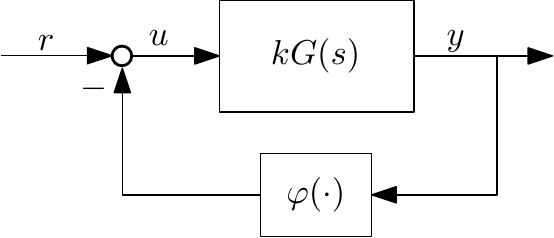}
		\caption{The Lure feedback system.}
		\label{fig:Harmonic_block}
	\end{figure}
	
	For any constant reference input $r$ (in Fig. \ref{fig:sys_block}), the closed-loop equilibria must be compatible with the equation 
	\begin{equation}
		\varphi(y) - r =\frac{y}{kG(0)}=\frac{y}{k(2\beta-1)}
	\end{equation}
	where $-k(2\beta-1)$ is the DC gain of $kG(s)$. The stability of these equilibria can be verified numerically, \cite{che2020tunable}. However, by construction, the root locus of the linearized closed-loop system 
	has the form in Fig. \ref{fig:splitting}. The red line crossing the real axis at $-\lambda$, $\lambda > \frac{1}{\tau_p}$, 
	separates the poles and the zeros of  $L(s)$ from the poles of $C(s)$. The zero $z_\beta$ of $C(s)$ can be 
	placed at any point of the positive real axis by  tuning the balance $\beta$. 
	The curves in blue represent the motion of the closed-loop poles (for increasing value of the feedback gain 
	$k$) and show how $z_\beta$ guides the loss of stability in closed loop, for sufficiently large
	gain $k$. Indeed, the mixed feedback guarantees bounded
	closed-loop trajectories, and provides a robust destabilizing mechanism, controlled by balance and gain of the feedback. 
	We use these features to control the closed loop into stable oscillations.
	
	\begin{figure}[t]
		\centering \vspace{3mm}
		\includegraphics[width=0.85\columnwidth]{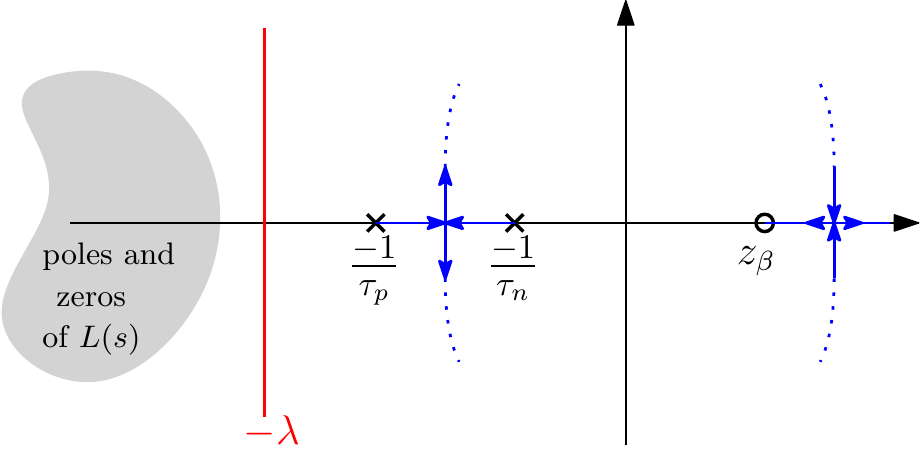}
		\caption{Root locus of $G(s)$.}
		\label{fig:splitting}
	\end{figure}
	
	\section{Determine the existence of oscillations}
	\label{sec:Methods}
	\subsection{Oscillations via $2$-dominance}
	Determining the existence of stable periodic oscillations is difficult for high dimensional systems.
	Dominance theory \cite{forni2018differential,miranda2018analysis}
	simplifies the study of periodic oscillations. 
	
	In the definition below from \cite{forni2018differential}, we use  $\partial f(x)$ to denote the Jacobian of $f$. 
	We also enforce an inertia constraint $(p,0,n-p)$ on a symmetric matrix $P$, which  means that 
	$P$ has $p$ negative eigenvalues and $n-p$ positive eigenvalues.
	\begin{definition}
		The nonlinear system $\dot{x}=f(x)$ is $p$-dominant with rate $\lambda\geq0$ if and only if there exist a symmetric matrix $P$ with inertia $(p,0,n-p)$ and $\varepsilon\geq0$ {such that the prolonged system
			\begin{equation}\label{eq:prolonged_sys}
				\begin{cases}
					\dot{x}=f(x)\\
					\delta \dot{x}=\partial f(x) \delta x
				\end{cases}\quad (x,\delta x)\in \mathbb{R}^n\times\mathbb{R}^n \ .
			\end{equation}  
			satisfies the} conic constraint
		\begin{equation}
			\label{eq:dom_inequality}
			\begin{bmatrix}
				\delta\dot{x}\\
				\delta x
			\end{bmatrix}^T \begin{bmatrix}
				0&P\\P&2\lambda P+\varepsilon I
			\end{bmatrix}\begin{bmatrix}
				\delta\dot{x}\\
				\delta x
			\end{bmatrix}\leq 0 
		\end{equation}
		along all its trajectories. The property is strict if $\varepsilon>0$.$\hfill\lrcorner$
	\end{definition} 
	
	Combining  \eqref{eq:prolonged_sys} and \eqref{eq:dom_inequality} we get the  
	Lyapunov inequality
	\begin{equation}
		(\partial f(x) + \lambda I)^T P + P (\partial f(x) + \lambda I) \leq -\varepsilon I \ .
	\end{equation}
	Given the constraint on the inertia of $P$, a necessary condition for the feasibility of this inequality is that $\partial f(x) + \lambda I$ has 
	$p$ unstable eigenvalues and $n-p$ stable eigenvalues, uniformly in $x$. 
	That is, $\partial f(x)$ must have $p$ eigenvalues to the right of $-\lambda$
	and $n-p$ to the left of $-\lambda$, a condition that is satisfied by the mixed feedback
	amplifier for $p=2$ (for small $k$ at least), as illustrated in Fig. \ref{fig:splitting}.
	
	Dominance theory provides an analytical tool to show the existence of a low dimensional attractor in a high dimensional nonlinear system, as clarified by the following theorem (\cite[Corollary 1]{forni2018differential}).
	\begin{theorem}\label{th:p-attractor} 
		For a strict $p$-dominant system with dominant rate $\lambda \ge 0$, every bounded trajectory asymptotically converges to
		\begin{itemize}
			\item a unique fixed point if $p=0$;
			\item a simple attractor if $p=2$, that is, a fixed point, a set of
			fixed points and connecting arcs, or a limit cycle.$\hfill\lrcorner$
		\end{itemize} 
	\end{theorem}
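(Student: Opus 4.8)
The plan is to reduce the two cases to established results---contraction theory for $p=0$ and Poincar\'e--Bendixson theory for rank-$2$ cones for $p=2$---with a single differential inequality as the common entry point. Combining the prolonged dynamics \eqref{eq:prolonged_sys} with the conic constraint \eqref{eq:dom_inequality} along any trajectory, and writing $V(\delta x)=\delta x^T P\delta x$, the symmetry of $P$ gives
\begin{equation}
	\dot V(\delta x) = 2\,\delta x^T P\,\partial f(x)\,\delta x \le -2\lambda V(\delta x) - \varepsilon|\delta x|^2,
\end{equation}
so that $\frac{d}{dt}\bigl(e^{2\lambda t}V(\delta x)\bigr)\le -\varepsilon e^{2\lambda t}|\delta x|^2\le 0$. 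This one inequality drives both cases.

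For $p=0$ the inertia $(0,0,n)$ forces $P\succ0$, so $V$ is a genuine Riemannian metric and the inequality above states that the system is strictly contracting in the $P$-metric (the term $-\varepsilon|\delta x|^2$ guarantees strict decay even when $\lambda=0$). I would then invoke the standard incremental-stability argument: trajectories contract toward each other in the $P$-metric, which on the bounded set traversed by a bounded trajectory produces a unique fixed point to which it converges. This direction I treat as routine once the contraction inequality is in hand.

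For $p=2$ I would exploit the sign structure of $P$ rather than definiteness. Since $P$ has inertia $(2,0,n-2)$, the quadratic set $\mathcal{K}=\{\delta x:\delta x^T P\delta x\le0\}$ is a cone of rank $2$: it contains two-dimensional subspaces but no three-dimensional one. The inequality shows $\mathcal{K}$ is forward invariant for the variational flow and, by strictness $\varepsilon>0$, that nonzero vectors are pushed into its interior. Read through the differential of the flow, this is exactly the statement that the nonlinear system is strictly monotone with respect to the cone order induced by $\mathcal{K}$. The conclusion then follows from the Poincar\'e--Bendixson theory for systems monotone with respect to cones of rank $2$ \cite{Smith1980,Sanchez2009,Sanchez2010}, which yields convergence of every bounded trajectory to a simple attractor in the stated sense.

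The hard part is the $p=2$ case, and specifically the bridge from the infinitesimal quadratic inequality to the hypotheses of the high-rank Poincar\'e--Bendixson theorem. I expect the delicate points to be: verifying that a quadratic cone with inertia $(2,0,n-2)$ is a cone of rank $2$ in the precise sense of \cite{Sanchez2009} (solid, closed, with the correct subspace structure); checking that strict forward invariance of $\mathcal{K}$ under the variational equation matches the strict monotonicity those theorems require; and handling the $e^{\lambda t}$ rescaling so that boundedness of the original trajectory and the asymptotic classification both transfer back faithfully.
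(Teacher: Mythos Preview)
The paper does not prove this theorem. It is quoted as \cite[Corollary~1]{forni2018differential} and used as a black box; there is no argument in the paper to compare your proposal against.

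That said, your outline is the standard one and matches the route taken in the cited source and in the cone-monotone literature it rests on \cite{Smith1980,Sanchez2009,Sanchez2010}. The derivation of $\frac{d}{dt}\bigl(e^{2\lambda t}V(\delta x)\bigr)\le -\varepsilon e^{2\lambda t}|\delta x|^2$ is correct and is exactly what yields forward invariance (and strict contraction into the interior) of the quadratic cone $\mathcal{K}=\{\delta x:\delta x^T P\delta x\le 0\}$ under the variational flow; from $V(\delta x(0))\le 0$ one gets $e^{2\lambda t}V(\delta x(t))\le V(\delta x(0))\le 0$, with strict inequality for $t>0$ whenever $\delta x\not\equiv 0$. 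For $p=0$ this collapses to a classical Riemannian contraction, and for $p=2$ it is the differential formulation of strict monotonicity with respect to a rank-$2$ cone, which is precisely the hypothesis of the high-rank Poincar\'e--Bendixson theorems you cite. The technical caveats you flag (checking that the quadratic cone has the required solidity and rank in the sense of \cite{Sanchez2009}, and that the $e^{\lambda t}$ rescaling does not interfere with the asymptotic classification of bounded trajectories) are real but are handled in \cite{forni2018differential}; they are bookkeeping rather than new ideas.
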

	
	Theorem \ref{th:p-attractor} clarifies that the asymptotic behavior of a $2$-dominant system 
	corresponds to the one of a planar system. This enables the use of 
	Poincar{\'e}-Bendixson-like approaches to characterize oscillations.

	We will use dominance and Theorem \eqref{th:p-attractor} to enforce oscillations in the mixed-feedback closed loop.
	In fact, the mixed-feedback closed loop is  $2$-dominant  in a specific
	range of $k$ and $\beta$, as clarified by the next theorem (adapted from \cite[Theorem 5]{che2020tunable}).	\
	\begin{theorem}\label{th:2-dominance}
		Consider a rate $\lambda$ for which the shifted
		transfer function $G(s-\lambda)$ has two unstable poles. Then, 
		for any constant $r$ and any $\beta\in[0,1]$, 
		the mixed-feedback closed loop system is 2-dominant with rate $\lambda$ 
		for any  gain $0 \leq k < \overline{k}_2 $, where 
		\begin{equation} \label{eq:k0}
			\overline{k}_2 = 
			\begin{cases}
				\infty \! & \! \mbox{if } \min\nolimits\limits_{\omega} \Re(G(j \omega-\lambda)) \!\geq\! 0 \\
				- \frac{1}{ \min\nolimits\limits_\omega \Re(G(j\omega-\lambda))} \!&\!  \mbox{otherwise. } \hfill\lrcorner 	
			\end{cases}
		\end{equation}
	\end{theorem}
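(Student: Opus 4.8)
The plan is to read the statement as a \emph{dominant} version of the circle criterion applied to the Lure form of Fig.~\ref{fig:Harmonic_block}. First I would fix a state-space realization $(A,B,C)$ of $G(s)$, so that the closed loop reads $\dot x = Ax + kB\varphi(y)$, $y = Cx$. The prolonged dynamics \eqref{eq:prolonged_sys} then carry the Jacobian $\partial f(x) = A + kB\varphi'(y)C$, and since $0\le \varphi'(y)\le 1$ the differential system is a linear system closed by a time-varying gain confined to the sector $[0,1]$; equivalently, the effective loop gain sweeps $[0,k]$ times $G$. The whole difficulty is thus to produce a \emph{single} symmetric matrix $P$ of inertia $(2,0,n-2)$ for which the Lyapunov inequality $(\partial f(x)+\lambda I)^T P + P(\partial f(x)+\lambda I)\le -\varepsilon I$ holds \emph{uniformly} over $\varphi'(y)\in[0,1]$.

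Second, I would obtain such a $P$ from the frequency-domain characterization of dominance for Lure systems in \cite{forni2018differential}. After the rate shift $s\mapsto s-\lambda$, this characterization splits into two requirements. The inertia requirement pins the signature of $P$: the endpoint $\varphi'=0$ gives the open-loop matrix $A+\lambda I$, whose unstable eigenvalues are exactly the right-half-plane poles of $G(s-\lambda)$; by hypothesis there are two of them, which forces the inertia $(2,0,n-2)$ and matches the $p=2$ setting. The frequency requirement, for the sector-$[0,1]$ slope acting on the linear part $kG(s-\lambda)$, reduces to the strict inequality $1 + k\,\Re\!\big(G(j\omega-\lambda)\big) > 0$ for every $\omega$. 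A KYP/S-procedure argument then turns this strict frequency inequality into the existence of the desired $P$ with $\varepsilon>0$, certifying \emph{strict} $2$-dominance with rate $\lambda$.

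Third, I would translate the frequency inequality into the gain bound \eqref{eq:k0}. If $\min_\omega \Re\big(G(j\omega-\lambda)\big)\ge 0$, the inequality holds for every $k\ge 0$ and $\overline k_2=\infty$; otherwise the binding frequency is the minimizer, and $1+k\,\Re(G(j\omega-\lambda))>0$ for all $\omega$ is equivalent to $k < -1/\min_\omega \Re(G(j\omega-\lambda))$, i.e.\ $k<\overline k_2$. Boundedness of trajectories---guaranteed by the bounded sigmoid $\varphi$ and the BIBO stability of $G$---then lets me invoke Theorem~\ref{th:p-attractor} to conclude that every closed-loop trajectory converges to a simple attractor.

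I expect the main obstacle to be the uniformity/inertia bookkeeping in the second step: showing that one $P$ certifies the inequality across the whole sector $\varphi'\in[0,1]$ while the inertia stays fixed at $(2,0,n-2)$. Concretely this requires that no eigenvalue of $A + k\theta BC + \lambda I$ crosses the imaginary axis as $\theta$ sweeps $[0,1]$, which is precisely what the strict frequency condition prevents through a homotopy/Nyquist count; verifying the hypotheses of the dominant circle criterion (the pole count of the shifted $G$ and the strictness of the inequality) is therefore where the real work lies, whereas the gain-bound computation of the third step is routine.
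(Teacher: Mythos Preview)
Your proposal is correct and follows essentially the same route as the paper: both apply the dominant circle criterion for Lure systems to the shifted loop $kG(s-\lambda)$, reducing $2$-dominance to the Nyquist-type condition $\Re\big(kG(j\omega-\lambda)\big)>-1$ for all $\omega$, which is exactly the gain bound \eqref{eq:k0}; the paper simply invokes the off-the-shelf result \cite[Corollary~4.5]{miranda2018analysis} where you unpack the sector/KYP/inertia machinery behind it. One minor remark: the final sentence invoking Theorem~\ref{th:p-attractor} and boundedness is not needed here, since the statement only asserts $2$-dominance, not convergence to a simple attractor.
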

	\begin{proof}
		For $k < \bar{k}_2$, $k G(j\omega - \lambda)$ lies to the right of the vertical line passing through $-1$,
		which guarantees that the closed loop is $2$-dominant by \cite[Corollary 4.5]{miranda2018analysis}.	
		Note that $\bar{k}_2$ is always greater than zero since  $G(j \omega-\lambda)$ has  finite magnitude 
		for all $\omega \in \mathbb{R} \pm\{\infty\}$.
	\end{proof}
	
	$2$-dominance and the root-locus in Fig. \ref{fig:splitting} suggest a way to 
	guarantee stable oscillations in closed loop:
	\begin{itemize}
		\item	
		we first determine the range of balances $\beta$ and gains $k$ 
		that guarantees $2$-dominance. We denote this region 
		in parameter space of $(k,\beta)$ by $\mathcal{R}_{2\text{dom}}$;
		\item
		within $\mathcal{R}_{2\text{dom}}$, stable oscillations can be precisely determined 
		by \emph{excluding} the parametric ranges where equilibria  are stable. 
		If the equilibria are unstable, Theorem \ref{th:p-attractor} guarantees that the mixed-feedback closed loop has periodic oscillations, since trajectories are bounded. We call the region of guaranteed oscillations 
		$\mathcal{R}_{\text{osc}}\subseteq\mathcal{R}_{2\text{dom}}$. 
	\end{itemize}
	For a first order load $L(s)$, an example
	is provided in Figure \ref{fig:predictions}(a). For a detailed study please refer to \cite{che2020tunable}.

	\subsection{Oscillations via harmonic balance}
	We briefly recap the classical harmonic balance method to predict oscillations, also known as describing function method \cite[Chapter 7]{khalil2002nonlinear}. The goal is to use this method to characterize the oscillations of the mixed-feedback closed loop in the quasi-harmonic regime, when possible (typically for small $\beta$).
	
	The idea is to look for periodic oscillations of the form $y(t) = E \sin(\omega t)$. We approximate the  nonlinearity $\varphi$ by its describing function, denoted by $N(E)$, which is given by the ratio between the first coefficient of the Fourier series of $\varphi(E\sin(\omega t))$ (first harmonic) and the oscillation amplitude $E$.
	An oscillation is predicted at the frequency $\omega$ that satisfies
	\begin{equation}
		kG(j\omega)=-\frac{1}{N(E)} \ .
	\end{equation}
	This corresponds to the intersection between the Nyquist plot of $kG(s)$ and the curve $-\frac{1}{N(E)}$. 
	
	There is no closed form solutions to the Fourier series of smooth odd nonlinearities $\varphi$, while they can be well-approximated by piecewise-linear function
	\begin{equation}\label{eq:piece_wise}
		\varphi_{pl}(y)=\begin{cases}
			y\quad \text{if }|y| \leq 1\\
			1\quad \text{otherwise.}
		\end{cases}
	\end{equation}
	For simplicity, in what follows we will restrict the harmonic balance analysis of the mixed feedback closed loop to $\varphi_{pl}$.
	This leads to the describing function
	\begin{equation}
		N_{pl}(E)=\begin{cases}
			1 \quad &\text{if }E\leq 1\\
			\frac{2}{\pi}[\sin^{-1}(\frac{1}{E})\!+\!\frac{1}{E}(1\!-\!\frac{1}{E^2})^{\frac{1}{2}}] \quad &\text{if }E> 1
		\end{cases}
	\end{equation}
	which is always real (as usual for odd nonlinearities) and 
	monotonically decreasing from $1$ to $0$ as $E\rightarrow\infty$.
	Indeed, our analysis will not be general for reasons of simplicity
	but can be adapted to any sigmoidal nonlinearity.

	\subsection{Relaxation oscillations via fast/slow analysis}
	\label{sec_existsosc_relaxation}
	
	Oscillations can be also predicted in the time domain. We adapt the method in \cite{aastrom1995oscillations}, 
	to determine the oscillations of the mixed-feedback closed loop
	in the relaxation regime, when possible (typically for large $\beta$).
	
	Again, we approximate $\varphi(\cdot)$ with the piecewise linear $\varphi_{pl}(\cdot)$  for simplicity. 
	For $|y| < 1$, the closed loop is a linear system.
	For $|y| \geq 1$, $|y|=1$ defines two switching planes, as shown in Fig. \ref{fig:switch_plane}.
	We assume that the traveling time between the two switching planes is negligible (fast unstable linear dynamics). 
	This corresponds to the case of a relaxation oscillation.
	In this setting,  the output $y$ jumps between positive and negative saturated value $\pm 1$,		
	producing a nearly square wave $\varphi(y)$. We can thus proceed like in \cite{aastrom1995oscillations}.
	\begin{figure}[htbp]
		\centering \vspace*{3mm}
		\includegraphics[width=0.6\columnwidth]{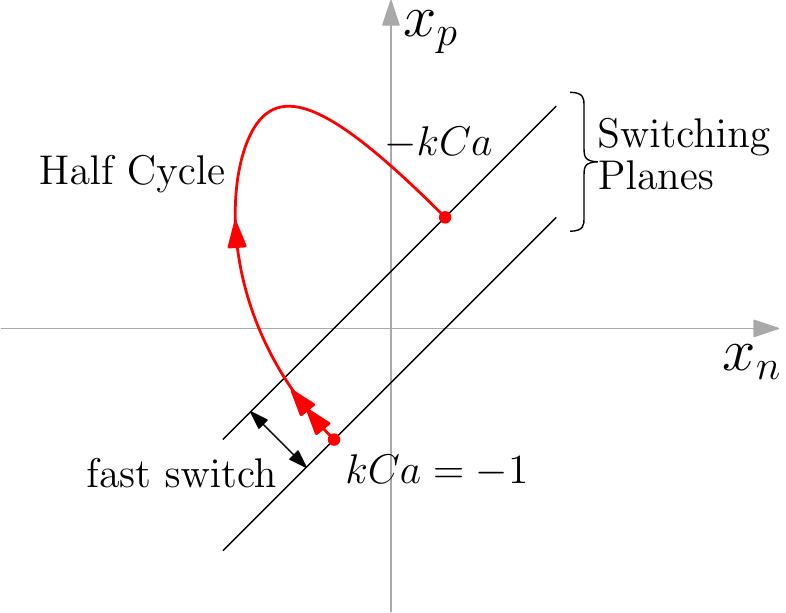} 
		\caption{An illustration of the switching planes projected on $x_p-x_n$ plane. The red curve denotes the projected state trajectory of a half cycle and `$a$' denotes the initial condition on the switching plane.} 
		\label{fig:switch_plane}
	\end{figure}
	
	Consider any minimal state-space realization $(A,B,C,0)$,
	of $G(s)$ with state given by $(\bar{x}, x_p, x_n)$, where $\bar{x}$ is the state component related to the load $L(s)$. $C$ corresponds to the matrix $\left[\begin{array}{ccccc} 0 & \dots & 0& \beta & \beta-1 \end{array}\right]$.
	The half cycle in Fig. \ref{fig:switch_plane} starts at the initial state $a$, with $y=kCa=-1$ that triggers the fast switch. Then, by symmetry, this half cycle ends at $-a$, which satisfies $-kCa=1$.
	Using the explicit solution of linear systems for constant inputs,
	\begin{equation}
		-a=e^{Ah} a-\Gamma(h) \quad 
		\Rightarrow  \quad a =(I+e^{Ah})^{-1}\Gamma(h)
	\end{equation}
	where $\Gamma(h)=\int_{0}^{h}e^{A\tau}d\tau B=A^{-1}(e^{Ah}-I)B$.
	
	Define$f(h) =Ca$. Then, the \emph{half period} of oscillation $h$
	satisfies
	\begin{equation}\label{eq:half_cycle_solution}
		kf(h)=kCa=kC(I+e^{Ah})^{-1}\Gamma(h)=-1 \ .
	\end{equation}

	\begin{figure}[!h]
		\centering
		\includegraphics[width=0.55\columnwidth]{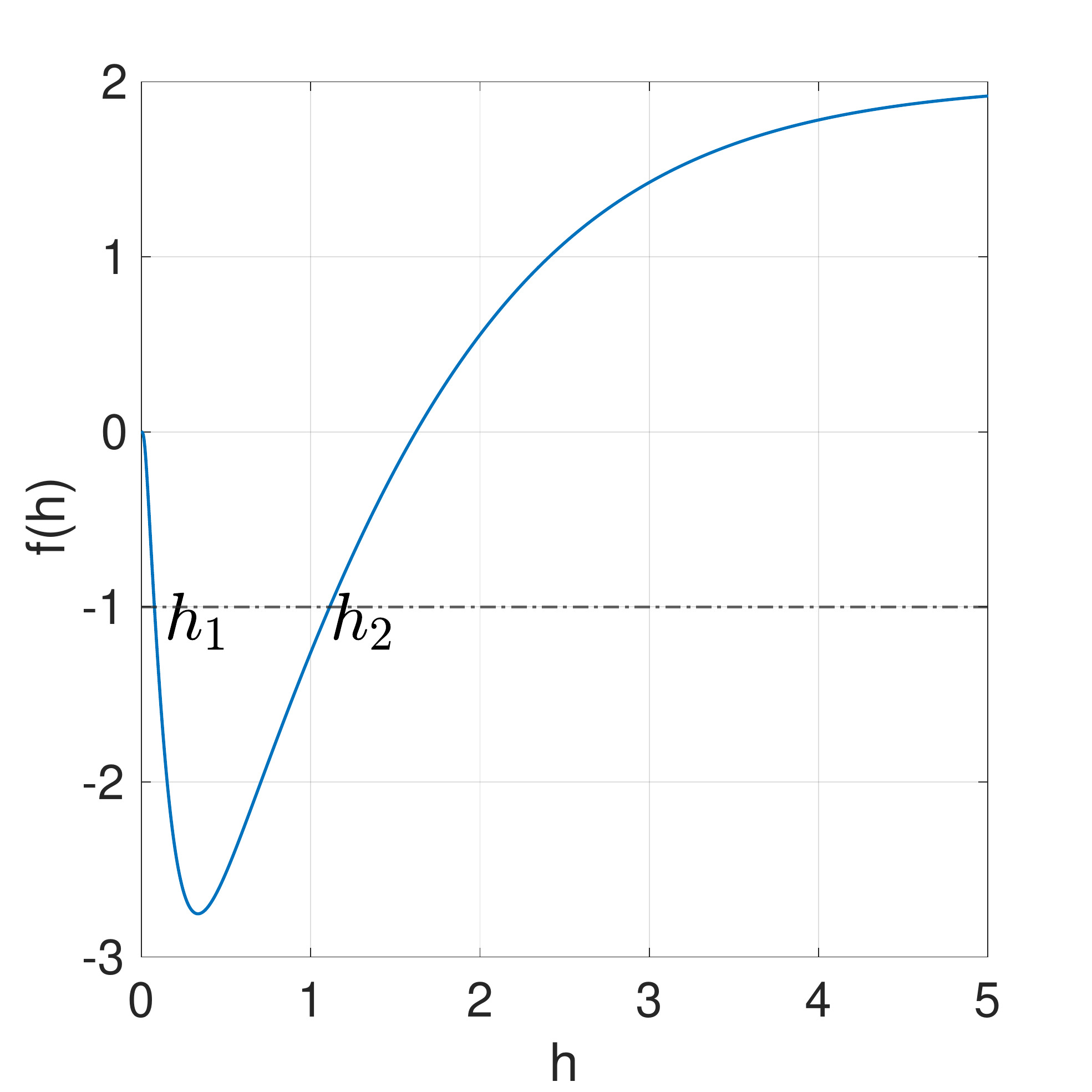}
		\caption{An example of $kf(h)$ for $k=10$, $\beta=0.4$, $\tau_\ell=0.01$, $\tau_p=0.1$, and $\tau_n=1$.
			$\tau_\ell$ is the the constant of the first order load $L(s) = \frac{1}{\tau_\ell s + 1}$.}
		\label{fig:fh}
	\end{figure}
	
	Figure \ref{fig:fh} shows a typical shape of $f(h)$ for the mixed feedback amplifier (this is obtained for a first order load $L(s) = \frac{1}{\tau_\ell s + 1}$). In general, the initial part of the curve corresponds to fast transients (the load, for example) driving the transition between switching planes. So, short time  solutions $h_1$
	should be neglected. Indeed, an oscillation with frequency $\pi/h_2$ is predicted if there is a solution that corresponds to a long interval  $h_2$ to \eqref{eq:half_cycle_solution}. This identifies the time of the half cycle illustrated in Fig. \ref{fig:switch_plane}.

	There are two situations where there is no long interval solution $h_2$ of \eqref{eq:half_cycle_solution} and the fast/slow analysis predicts no oscillations: for fixed $k> 0$, 
	\begin{enumerate}
		\item $\min\nolimits\limits_{h\geq 0} kf(h)>-1$;
		\item $\lim\limits_{h\rightarrow+\infty}kf(h)= kG(0) = -k(2\beta-1)<-1$.
	\end{enumerate} 
	1) represents the case in which $kf(h)$ has no intersections at all with the line $-1$. 2) represents
	the case in which the output does not reach the switching plane after initial transients, that is, 
	no half-cycle occurs. It follows that
	1) sets a lower bound on $k$. Furthermore, 2) constraints $k$ only when $2\beta-1>0$, that is, 
	when $\beta > \frac{1}{2}$. In that case, 2) sets an upper bound on $k$ for oscillations. 
	For illustration, these two bounds on $k$ are represented in Figure \ref{fig:predictions}(c), for
	the case of first order load. Note that  for $\beta \leq \frac{1}{2}$ there is no upper bound on $k$. 
	
	\subsection{Integration of the three methods for control design}
	
	The blue regions in Fig. \ref{fig:predictions} show that all three methods lead to similar predictions.
	These regions are obtained for the simple setting of a first order load $L(s) = \frac{1}{\tau_\ell s + 1}$, 
	for time-constants $\tau_\ell=0.01$, $\tau_p=0.1$, and $\tau_n=1$.	
	
	The difference among the three methods is that dominance analysis is not an approximated method, therefore 
	it can be used to \emph{certify} the existence of oscillations, in both harmonic and relaxation regimes. In this sense, 		dominance analysis responds to the shortcoming of harmonic balance and fast/slow analysis, due to their
	approximation natures.  At the same time, dominance analysis does not provide any information about
	the oscillation frequency of the mixed-feedback closed loop. Here harmonic balance and fast/slow analysis respond
	to the shortcoming of dominance analysis, providing guidance on the selection of $\beta$ and
	$k$ to \emph{achieve a desired  oscillation frequency} in closed loop.
	
	The idea is thus to integrate these methods to achieve a reliable control design of oscillators. We use dominance theory to determine the parameter range that guarantees oscillations. Within this range, we use either harmonic balance or fast/slow analysis for controlling the oscillation frequency.

	\begin{figure*}[htbp]
		\centering \vspace{3mm}
		\includegraphics[width=1\textwidth]{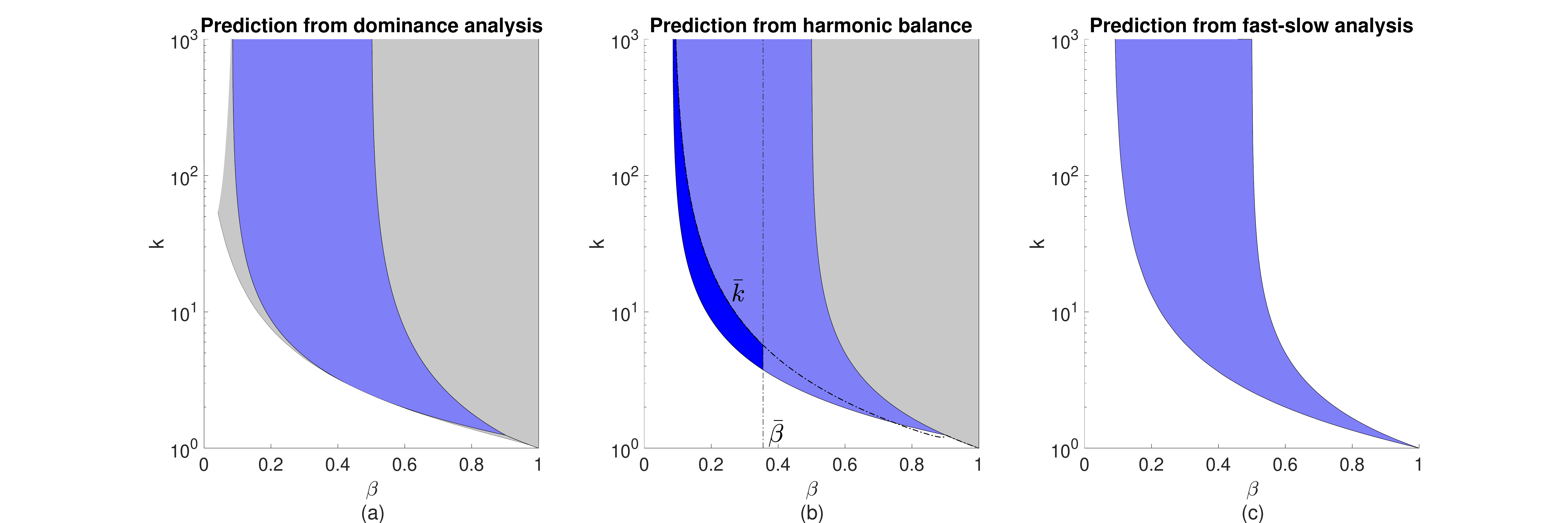}
		\caption{Predictions of the mixed feedback amplifier with $L(s) = \frac{1}{\tau_\ell s + 1}$ and $\tau_\ell=0.01$, $\tau_p=0.1$,  $\tau_n=1$, and $r=0$. \textbf{(a) Dominance analysis.} Grey region - 2-dominant region,
			$\mathcal{R}_{2\text{dom}}$ (stable and oscillatory regimes). Blue region - oscillations region $\mathcal{R}_{\text{osc}}$ (unstable equilibrium). \textbf{(b) Harmonic balance.} Gray region - multiple intersection between Nyquist plot and $N(E)$. Blue region - one intersection between Nyquist plot and $N(E)$. Dark blue region: accurate prediction of oscillations ($G(s)$ is low pass). \textbf{(c) Fast-slow analysis.} Blue region - oscillations with estimated period $\pi  / h_2$.}
		\label{fig:predictions}
	\end{figure*}	
	
	\section{Frequency shaping via harmonic balance}
	\label{sec:Harmonic_Balance}
	\subsection{Parameter range for accurate prediction}
	The harmonic balance method assumes that the linear subsystem $G(s)$ is a low pass filter
	such that higher order harmonic signals in $\varphi(y)$ are suppressed. 
	This means that to achieve reliable tuning of the mixed-feedback amplifier using the harmonic balance method, we need to identify gain and balance regimes, $k$ and $\beta$, that lead to accurate predictions.
	Consider a desired frequency of oscillations $\omega_r$:
	\begin{itemize}
		\item[i)] 
		we need $\angle G(j\omega_r)=-180^\circ$. This is achieved by selecting $\beta$,
		which moves the zero $z_\beta=\frac{1-2\beta}{\beta(\tau_p+\tau_n)-\tau_p}$, shaping
		the phase of $G$. For a low-pass $G$, the magnitude of $z_\beta$ should be at least larger than the magnitude of the smallest pole of $G(s)$. This sets an upper bound 
		\begin{equation}\label{eq:beta_bar}
			\beta < \bar{\beta}=\inf_{\beta\in[0,1]}\left|\frac{1-2\beta}{\beta(\tau_p+\tau_n)-\tau_p}\right|\geq\frac{1}{\tau_n} \ ;
		\end{equation} 
		\item[ii)] the gain $k$ modulates the magnitude of $G(s)$.
		To filter out the higher harmonics, 
		$G(s)$ must have gain less than $1$ for frequencies $n\omega_r$,  $n \geq 2$. 
		For each $\beta<\bar{\beta}$, this enforces an upper limit  $\bar{k}(\beta)$ on $k$ given by 
		\begin{equation}
			\label{eq:bark}
			\bar{k}(\beta)=\frac{1}{|G(2j\omega_r)|}.	
		\end{equation} 
	\end{itemize}
	
	As an example, $\bar{\beta}$ and $\bar{k}$ are illustrated in Fig. \ref{fig:predictions}(b) for the case of a first order load. The dark blue region guarantees accurate predictions. For $k$ and $\beta$ not in the dark blue region, higher order harmonics are not attenuated; prediction accuracy degrades and oscillations start to transform into relaxation oscillations. For $k$ and $\beta$ not in the dark blue region, prediction accuracy degrades since higher order harmonics are not attenuated. Oscillations start to transform into relaxation oscillations. Thus for design using harmonic balance method, we search for $k<\bar{k}$ and $\beta<\bar{\beta}$.

	\subsection{Design procedure}
	Within the parametric range discussed in the last section, we use harmonic balance to solve the following 
	problem: \emph{find $(k,\beta) \in \mathcal{R}_{\text{osc}}$, 
		$k < \bar{k}$, and $\beta < \bar{\beta}$ such that the mixed-feedback closed loop 
		oscillates with desired frequency $\omega_r$}. 
	
	The key is to  find a $\beta<\bar{\beta}$ such that $\angle(G(j\omega))=-180^\circ$. Denote the phase contribution of
	the numerator and denominator of $-C(j\omega_r)$ respectively by $\theta_n$ and $\theta_d$. The phase of $G(s)$ at $\omega_r$ is
	\begin{equation}
		\label{eq:phase_balance}
		\angle G(j\omega_r) =\angle(-1)+\theta_d+\theta_n + \angle L(j\omega_r)=180^\circ+2m\pi
	\end{equation} 
	Note that both $\beta$ and $k$ only affect $\theta_n$ (the phase of the denominator is 
	affected by the selection of the time constants $\tau_n$ and $\tau_p$). Thus, for any given $\omega_r$,	
	\begin{itemize}
		\item find $\beta$ such that $\theta_n = -\theta_d - \angle L(j\omega_r) - 2m\pi$.
		\item if $\beta<\bar{\beta}$, select a suitable $k<\bar{k}$
		\item If $\beta \geq\bar{\beta}$ , repeat the design for different 
		time constants $\tau_p$ and $\tau_n$.
	\end{itemize} 
	
	We close this section by clarifying the relationship between $\beta$ and $\theta_n$. Note that 
	$\theta_n=\arctan\big(\frac{(\beta(\tau_p\!+\!\tau_n)-\tau_p)\omega_r}{2\beta-1}\big)$, whose range depends on the sign of $\beta(\tau_p+\tau_n)-\tau_p$ and $2\beta-1$. Therefore, $\theta_n$ falls into three different ranges for $\beta\in[0,1]$:
	\begin{table}[!h]
		\centering
		\begin{tabular}{c|ccc} \toprule
			{$\beta\in$} & $[0,\frac{\tau_p}{\tau_p+\tau_n})$ & $[\frac{\tau_p}{\tau_p+\tau_n},0.5)$ & $[0.5,1]$ \\ \midrule
			{$\theta_n\in$} & $[180^\circ,270^\circ]$ & $[90^\circ,180^\circ]$ & $[0^\circ,90^\circ]$\\ \bottomrule
		\end{tabular}
	\end{table}
	
	The discussion above suggests that the desired oscillation frequency $\omega_r$ must decrease as $\beta$ increases. This is because the range of $\theta_n$ gets smaller, therefore it is harder to balance the
	phase contribution of $\theta_d + \angle L(j\omega_r)$ for large frequencies. 
	
	\section{Frequency shaping via fast/slow method}
	\label{sec:fast/slow}
	\subsection{Parametric range for accurate prediction}
	The first step for fast/slow method is also to quantify a sub parametric range $(k,\beta)$  of $\mathcal{R}_{\text{osc}}$ which gives accurate predicted frequencies. The main source of the approximation error is the presence of a non-negligible switching transient, that is, the time the system needs to move between
	switching planes in Fig. \ref{fig:switch_plane}. This time is reduced for large $\beta$, which increases 
	positive feedback and pushes the control signal towards saturation. Likewise, the distance $d$ among
	the switching planes is also reduced by larger feedback gains $k$, since
	$
	d=\frac{2}{k\sqrt{2\beta^2-2\beta+1}}.
	$
	
	Unlike the harmonic balance method, there is no quantitative criteria, such as loop shape and gain, to set the range of $(k,\beta)$. We thus restrict the use of the fast/slow method to those parameters for which harmonic balance is unreliable ($k>\bar{k}$, and $\beta > \bar{\beta}$) with lower bounds $\bar{k}$ and $\bar{\beta}$ defined in Section \ref{sec:Harmonic_Balance}.
	
	\subsection{Design procedure}
	The design problem for the fast/slow method is as follows: \emph{find $(k,\beta) \in \mathcal{R}_{\text{osc}}$, 
		$k > \bar{k}$ and $\beta > \bar{\beta}$ such that the mixed-feedback closed loop oscillates with desired frequency $\omega_r$}.
	
	The desired frequency $\omega_r$ defines the half cycle period $h_r=\pi/\omega_r$. 
	For this period, $(k,\beta)$ are found by solving $kf(h_r)=-1$. Remember that 
	\begin{equation}\label{eq:half_cycle_design}
		kf(h_r)=kC(I+e^{Ah_r})^{-1}A^{-1}(e^{Ah_r}-I)B=-1 \, ,
	\end{equation}
	which shows that $k$ and $\beta$ appear only in 
	$kC = k\left[\begin{array}{ccccc} 0 & \dots & 0& \beta & \beta-1 \end{array}\right]$.
	This strongly simplifies the search for feasible pairs $(k,\beta)$.	
	
	In contrast to the harmonic balance method where each desired frequency $\omega_r$ 
	corresponds to a specific balance $\beta$,
	the fast/slow method is more flexible, possibly offering several solutions of $(k,\beta)$ for the same
	desired  frequency $\omega_r$.

	\section{Example: controlled oscillations of a two-mass system}
	\label{sec:example}
	As an illustration, we present an example of two-mass system in Fig. \ref{fig:mass-sys}. 
	With asymmetric frictions at contact with ground, oscillations of the two masses lead to a 
	positive average displacement of the center of mass of the system. In this setting, the 
	two-mass system is a basic model for studying locomotion. 
	
	\begin{figure}[!h]
		\centering
		\includegraphics[width=0.2\textwidth]{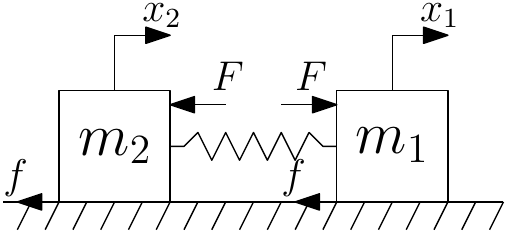}
		\caption{The double mass system.}
		\label{fig:mass-sys}
	\end{figure}
	
	The dynamics of the two-mass system \cite{majewski2011oscillatory} satisfies
	\begin{equation}
		\begin{cases}
			\ddot{x}_1=-k_m(x_1-x_2)-d_m(\dot{x}_1-\dot{x}_2)+ \gamma F-f(\dot{x}_1)\\
			\ddot{x}_2=k_m(x_1-x_2)+d_m(\dot{x}_1-\dot{x}_2)-\gamma F-f(\dot{x}_2)
		\end{cases}
	\end{equation}
	where $k_m=100$ and $d_m=10$ are normalized elastic and damping coefficients, respectively;
	$F$ is the (internal) force produced by the actuator, scaled by a factor $\gamma=100$ for simplicity; and 
	$f$ models asymmetric friction forces.
	
	The objective is to design a feedback controller that drives the two-mass system into oscillations,
	by acting on the force $F$. To keep the design within the linear setting of this paper we 
	design our controller by neglecting the asymmetric friction forces, taking  $f=0$. 
	Define $w=x_1-x_2$. For $f =0$ we have
	$
	\ddot{w}=-2k_mw-2d_m\dot{w}+2\gamma F 
	$,
	which gives the load  transfer function
	\begin{equation}\label{eq:2mass_tf}
		L(s)=\frac{200 }{s^2+20s+200} \ .
	\end{equation}
	$L(s)$ has poles at $-10\pm10j$. In the notation of Fig \ref{fig:sys_block}, $w$ is the output load $L(s)$
	block, which is used by the mixed feedback channel to generate the control signal $F = \varphi(y)$ (simulations use $\varphi=\tanh$).
	We consider $r=0$ and we set the positive and negative feedback time constants as $\tau_p=1$ and $\tau_n=10$. With this load and time constants, we get 
	\begin{equation}
		G(s)=\frac{-200\Big(\big(1.1\beta-1\big)s+2\beta-1\Big)}{(s^2+20s+200)(s+1)(10s+1)}
	\end{equation}
	As an illustration, we consider the designing of two oscillation frequencies, $\omega_r=1$ rad/s and $\omega_r=0.1$ rad/s.
	
	We use harmonic balance method for $\omega_r=1$ rad/s.
	From \eqref{eq:beta_bar}, the balance upper bound  is $\bar{\beta}=0.3548$, under which the linear system is  low pass. For $\omega_r=1$, \eqref{eq:phase_balance} leads to $\beta=0.1538$.
	Using \eqref{eq:bark} we get the gain  upper bound $\bar{k}=28.9494$.
	By dominance analysis, oscillations exist for $k>14.5217$ for $\beta=0.1538$. We thus choose $k=20$.

	By using the harmonic balance procedure for  $\omega_r=0.1$ rad/s,  \eqref{eq:phase_balance} leads to the selection 
	$\beta=0.8226$ which is much greater than $\bar{\beta}$, meaning that the oscillation at such low frequency is of the relaxation type. Hence we switch to the fast/slow method. From Section \ref{sec:fast/slow}, we
	take $h_r=\pi/\omega_r\approx31.4$ s. 
	Then, $\beta_r=0.5$ and $k=24$ is a solution to \eqref{eq:half_cycle_design}
	These parameters are compatible with dominance, which guarantees oscillations for 
	any $k>3.1584$ given $\beta$.
	
	Fig. \ref{fig:eg_simu1} presents the simulation results for $F = \varphi(y)$, where $y$ is generated
	by the mixed feedback channels in  Fig. \ref{fig:sys_block}, for $f = 0$ (no asymmetric frictions).
	The frequencies of the simulations agree with the specifications.
	\begin{figure}[htbp]
		\centering
		\includegraphics[width=0.93\columnwidth]{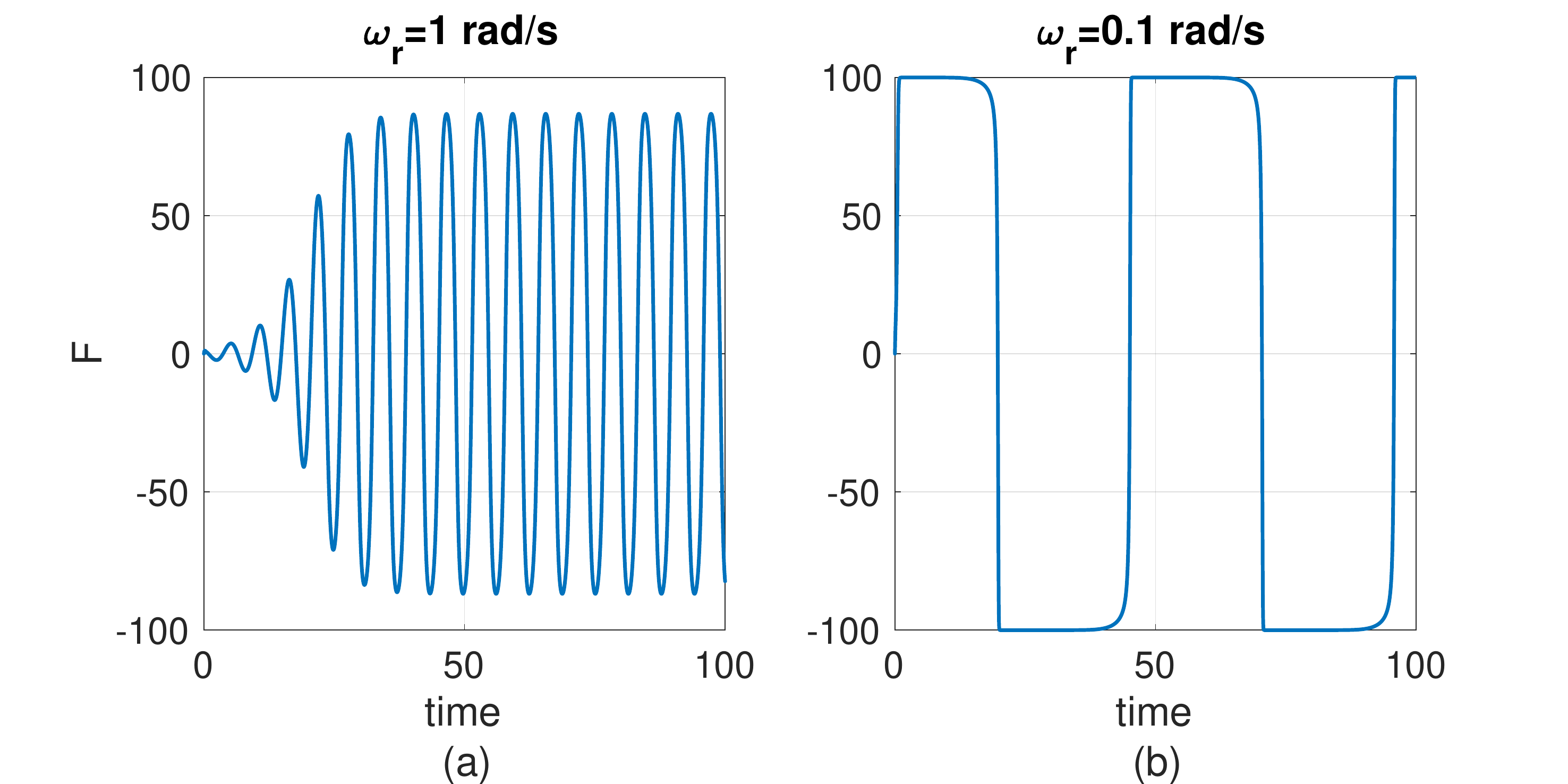} 
		\caption{Closed-loop controlled oscillations of the two-mass system. Left: desired frequency $\omega_r=1$ rad/s, achieved frequency $0.9906$ rad/s. Right: desired frequency $\omega_r=0.1$ rad/s, achieved frequency $0.1238$ rad/s.} 
		\label{fig:eg_simu1}
	\end{figure}
	
	We now reintroduce the asymmetric friction forces $f$. A complete analysis of the robustness of the
	oscillations is beyond the scope of this paper. We just emphasize that the mixed feedback induces 
	a hyperbolic instability of the closed loop equilibria, 
	which in turn guarantees robustness of oscillations for small perturbations. This is illustrated
	through simulations. For  $i\in\{1,2\}$
	we take the forward and backward friction forces as:
	\begin{equation}
		f(\dot{x}_i)
		= \begin{cases}
			5 \quad &\text{if } \dot{x}_i > 0\\
			0 \quad &\text{if } \dot{x}_i = 0\\
			-20	\quad &\text{if } \dot{x}_i<0 .
		\end{cases}
	\end{equation}
	
	\begin{figure}[htbp]
		\centering 
		\includegraphics[width=0.93\columnwidth]{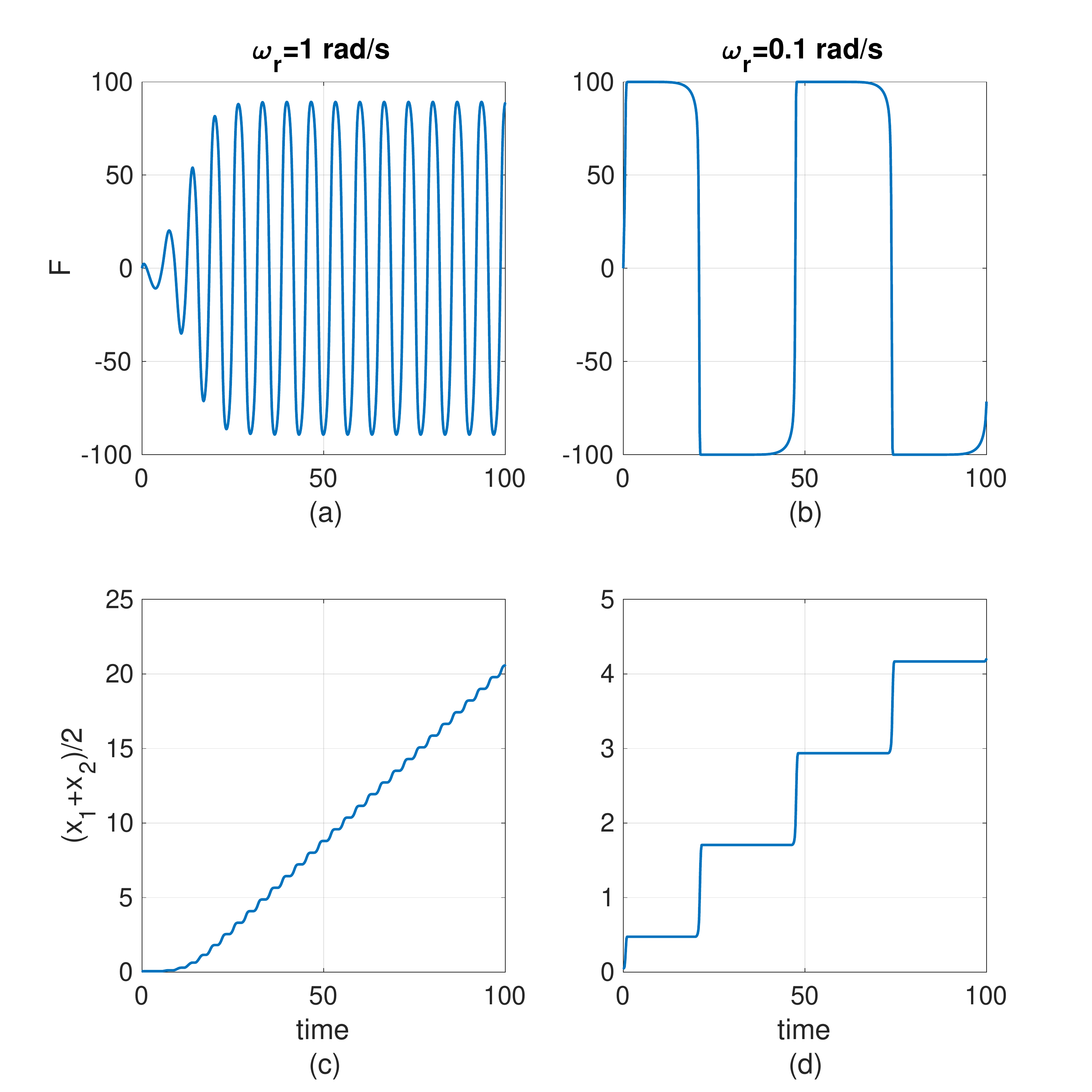}  
		\caption{Closed-loop oscillations of the two-mass system with frictions 
			Left: desired frequency $\omega=1$ rad/s, achieved frequency $0.9374$ rad/s. Right: desired frequency $\omega_r=0.1$ rad/s, achieved frequency $0.1185$ rad/s.}
		\label{fig:eg_simu2_with_f}
	\end{figure}
	
	The closed loop maintains its oscillation patterns with mild frequency changes,  as shown in Fig. \ref{fig:eg_simu2_with_f}, (a) and (b). Fig. \ref{fig:eg_simu2_with_f}, (c) and (d) illustrate the forward
	motion of the system.

	\section{Conclusions}
	We study the problem of controlling oscillations in closed loop by combining positive
	and negative feedback in a mixed configuration. This is illustrated by developing
	a complete design, using dominance theory to set balance $\beta$ and gain $k$ 
	to achieve reliable oscillations, and harmonic balance and fast/slow analysis
	to regulate those oscillations towards a desired frequency.
	The design is illustrated on a simple two-mass system, where the mixed
	feedback regulates oscillations to achieve locomotion, emulating approaches
	based on central pattern generators.
	
	In contrast to classical entrainment due to a driving external source, 
	generating endogenous oscillations \emph{through feedback} opens the way
	to questions of sensitivity/robustness of the oscillations to interconnections. Understanding how the frequency of oscillations is shaped by the interaction with an external system is a relevant question to enable adaptive control schemes in applications.

	\bibliographystyle{ieeetr}

\end{document}